\pgfplotsset{compat=1.5}
\newtheorem{remark}{Remark}
\newtheorem{theorem}{Theorem}
\newtheorem{assumption}{Assumption}
\newtheorem{objective}{Objective}
\newtheorem{proposition}[theorem]{Proposition}
\newtheorem{lemma}{Lemma}
\newcommand{\bR} { {\mathbb R}}
\newcommand{\cB} { {\mathcal B}}
\tikzstyle{block} = [draw, rectangle, minimum height=2em, minimum
\tikzstyle{sum} = [draw, fill=blue!20, circle, node
\tikzstyle{input} = [coordinate] \tikzstyle{output} =
\tikzstyle{pinstyle} = [pin edge={to-,thin,black}]
\tikzstyle{every
node}=[font=\small] \tikzstyle{every path}=[line width=0.8pt,line
\begin{document}
\title{Safety during Transient Response in Direct Current Microgrids using Control Barrier Functions}
\author{ K. C. Kosaraju, S. Sivaranjani, and V. Gupta
\thanks{K.C. Kosaraju, V. Gupta are with the Department of Electrical Engineering, University of Notre Dame, Notre Dame, IN 46556, USA (email: \{kkosaraj, vgupta2\}@nd.edu). S. Sivaranjani is with the Department of Electrical and Computer Engineering, Texas A\&M University, College Station, TX 77843, USA (email: sivaranjani@tamu.edu).}
}

\maketitle

\IEEEpeerreviewmaketitle
\thispagestyle{empty}
\pagestyle{empty}
\begin{abstract}
We consider the problem of guaranteeing that the  transient voltages and currents stay within prescribed bounds in Direct Current (DC) microgrids, when the controller does not have access to accurate system dynamics due to the load being unknown and/or time-varying. To achieve this, we propose an optimization based controller design using control barrier functions. We show that the proposed controller has a decentralized structure and is robust with respect to the uncertainty in the precise values of the system parameters, such as the load.
\end{abstract}
\vspace{-0.3cm}
\section{Introduction}
The design and operation of microgrids, which are interconnected clusters of Distributed Generation Units (DGUs), loads and energy storage units interacting with each other through distribution lines, has been widely studied in the literature. This paper focuses on Direct Current (DC) microgrids
~\cite{trip2018distributed,sadabadi2020line,cavanagh2017transient,justo2013ac}. The main control objective in DC microgrids is to ensure that the load voltage is stabilized to a desired value \cite{dragivcevic2015dc}. To achieve this objective, several control methods have been proposed, e.g., droop control \cite{zhao2015distributed}, plug-and-play control \cite{tucci2017line}, sliding mode control~\cite{cucuzzella2018sliding}, passivity-based control~\cite{kosaraju2020differentiation}, output regulation \cite{9129742} and input-to-state stability based control~\cite{iovine2018voltage}. 

However, there is limited literature studying the problem of maintaining the voltages and currents in the system within some prescribed bounds during transient operation in DC microgrids. Violating such constraints could lead to deterioration of equipment, ultimately leading to its failure. In this paper, we consider this problem. There are at least two challenges here. First, the control strategy should be decentralized (or at least distributed), for reasons of scalability and robustness. Second, since the knowledge of system parameters (such as the load values) is always imperfect, the controller should not rely on precise values of such parameters being available. As a first step towards solving the more general problem, we design a decentralized control algorithm for voltage and control regulation during transient operation for an islanded DC microgrid with purely resistive lines that does not require accurate information of load values. 

Our controller design relies on casting the problem as one of safety and utilizing Control Barrier Functions (CBFs) to ensure that the system trajectories remain in a desired safe set. CBFs are now a widely accepted tool for designing safety based controllers \cite{XU201554, ames2019control, xiao2019control}. CBFs  guarantee  the existence  of  control  inputs  under  which  a  super-level  set  of a function (typically representing a specification  such as safety) is forward invariant under a given dynamics. However, they do not seem to have been widely explored in the context of DC microgrids. Our main contribution over the existing literature is that we utilize CBFs to design an optimization based controller for DC microgrids that guarantees constraint satisfaction during transient response, while being decentralized and not requiring a precise knowledge of the system parameters. 


The paper is organized as follows.  In Section~\ref{sec:problem}, we present the model of the DC microgrid and provide the problem formulation.  In Section~\ref{sec:results}, we propose a new control design using control barrier functions that solves the problem. In Section~ \ref{sec:sim}, we corroborate the proposed control design in simulations using a DC microgrid with four buck converters. Finally, in Section \ref{sec:conc}, we conclude and provide some directions for future work. For completeness, a short review of control barrier functions is provided in Appendix A. 

\paragraph*{Notation} $\mathbb{R}^{n}$ denotes the space of $n$-dimensional real vectors and $\mathbb{R}$ denotes the space of real numbers. $\mathbb{1}$ denotes the vector of all ones with the dimension clear from the context. For a vector $v\in \bR^n$, $v^{T}$ denotes its transpose, $\|v\|_{2}$ denotes its 2-norm, $v_{i}$ denotes its $i$-th element, and $[v]$ denotes a diagonal matrix with $v_i$ as the $i^{th}$ diagonal entry, $\forall i \in \left\{1, \ldots,n\right\}$. For two vectors $u,v\in\mathbb{R}^{n}$, the inequality $u\leq v$ is interpreted element wise. A graph $\mathcal{G}=\left(\mathcal{V}, \mathcal{E}\right)$ is defined by a node set $\mathcal{V}$ and an edge set $\mathcal{E}.$ For a set $\mathcal{S}$, $|\mathcal{S}|$ denotes its cardinality. A function $h:\bR^n \rightarrow \bR$ is  said to be of class $C^k$ if the first $k$ derivatives exist and are continuous. A continuous function $h: [0,a) \rightarrow [0,~\infty)$, $a>0$ is said to belong to class $\mathcal{K}$ function if it is strictly increasing and $h(0) = 0$. It is said to belong to class $\mathcal{K}_{\infty}$ if $a=\infty$ and $h(r)\rightarrow \infty$ as $r\rightarrow \infty$. Given $f(x): \bR^n \rightarrow \bR^n$ and $h(x) \in C^1$, $L_fh(x)$ denotes the Lie derivative of $h(x)$ along the direction of $f(x)$. A function $f:A\rightarrow B$ is Lipschitz if there exists a constant $L$ satisfying $\|f(b)-f(a)\|_2 \leq L\|b-a\|_2$, for all $a,~b \in A$ and class $C^{1}$ if it is continuously differentiable. 

\section{DC Microgrid Model and Problem Formulation}\label{sec:problem}
In this section, we introduce the DC microgrid model and formulate the  problem of guaranteeing bounds on voltage and current during transient operation.
\vspace{-0.3cm}
\subsection{DC microgrid model}
\label{sec:model}
\begin{figure}[htbp]
	\begin{center}
		\ctikzset{bipoles/length=0.7cm}
		\begin{circuitikz}[scale=0.95,transform shape]
			\ctikzset{current/distance=1}
			\draw
			node[] (Ti) at (0,0) {}
			node[] (Tj) at ($(5.4,0)$) {}
			node[ocirc] (Aibattery) at ([xshift=-4.5cm,yshift=0.9cm]Ti) {}
			node[ocirc] (Bibattery) at ([xshift=-4.5cm,yshift=-0.9cm]Ti) {}
			(Aibattery) to [battery, l_=\small{}$V_{s,i}$,*-*] (Bibattery) {}
			node [rectangle,draw,minimum width=0.4cm,minimum height=2.4cm] (Bucki) at ($0.5*(Aibattery)+0.5*(Bibattery)+(0.83,0)$) {\scriptsize{Switch} $i$}
			(Aibattery) to [short] ([xshift=0.25cm]Aibattery)
			(Bibattery) to [short] ([xshift=0.25cm]Bibattery)
			node[ocirc] (Ai) at ($(Aibattery)+(1.585,0.2)$) {}
			node[ocirc] (Bi) at ($(Bibattery)+(1.585,-0.2)$) {}
			(Ai) to [short] ([xshift=-0.18cm]Ai)
			(Bi) to [short] ([xshift=-0.18cm]Bi)
			(Ai) to [L, l={$L_i$}] ($(Ai)+(1.25,0)$){}
			to [short,i={$I_{i}$}]($(Ai)+(1.35,0)$){}
			to [short, l={}]($(Ti)+(0,1.1)$){}
			(Bi) to [short] ($(Ti)+(0,-1.1)$);
			\begin{scope}[shorten >= 10pt,shorten <= 10pt,]
				\draw[<-] (Ai) -- node[right] {$u_{i}$} (Bi);
			\end{scope};
			\draw
			($(Ti)+(-0.7,1.1)$) node[anchor=north]{{$V_{i}$}}
			($(Ti)+(-0.7,1.1)$) node[anchor=south]{}
			($(Ti)+(-0.7,1.1)$) node[ocirc](PCCi){}
			($(Ti)+(0,1.1)$) to [R, l={$G_{i}$}]($(Ti)+(0,-1.1)$)
			($(Ti)+(-1.1,1.1)$) to [C, l_={$C_{i}$}] ($(Ti)+(-1.1,-1.1)$)
			($(Ti)+(0,1.1)$) to [short] ($(Ti)+(1.1,1.1)$)
			($(Ti)+(1.1,1.1)$) to [short] ($(Ti)+(1.5,1.1)$)
			
			($(Ti)+(1.5,1.1)$)--($(Ti)+(1.9,1.1)$) to [R, l={$R_{ij}$}] 
			($(Tj)+(-1.5,1.1)$) to [short] ($(Tj)+(-1.5,1.1)$)--($(Tj)+(-1.5,1.1)$)
			($(Ti)+(0,-1.1)$) to [short] ($(Tj)+(-1.5,-1.1)$);
			\draw
			node [rectangle,draw=none,minimum width=6.5cm,minimum height=3.6cm,dashed,fill=blue,opacity=0.1,label=\textbf{DGU $i$},densely dashed, rounded corners] (DGUi) at ($0.6*(Aibattery)+0.5*(Bibattery)+(2.9,0.4)$) {}
			node [rectangle,draw=none,minimum width=1.9cm,minimum height=3.6cm,fill=gray,opacity=0.3,label=\textbf{Line $ij$}, rounded corners] (DGUi) at ($0.5*(Aibattery)+0.5*(Bibattery)+(7.2,0.4)$) {};
		\end{circuitikz}
		\caption{Electrical scheme of DGU $i \in \mathcal{V}$ and line $k \sim \{i, j\} \in \mathcal{E}$, $j \in \mathcal{N}_i$, where $\mathcal{N}_i$ is the set of the DGUs connected to DGU $i$.}
		\label{fig:microgrid}
	\end{center}
\end{figure}
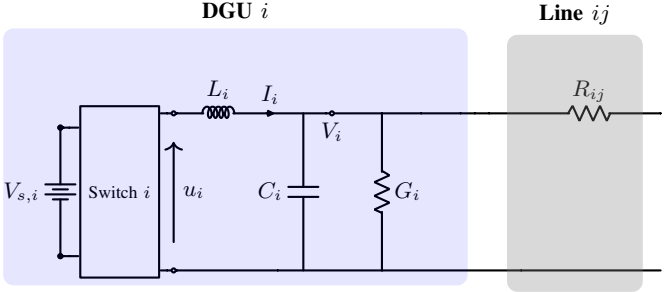
We model a DC microgrid by a connected undirected graph  $\mathcal{G} = \left(\mathcal{V}, \mathcal{E}\right)$, where each node $i\in\mathcal{V}$ represents a distributed generating unit (DGU) containing a voltage source, buck converter to step-down the voltage, and a load, and each edge $e\in\mathcal{E}$ represents a transmission line interconnecting the corresponding DGUs. Let $|\mathcal{V}|=n$ and $|\mathcal{E}|=m.$ For each edge $(u,v)\in\mathcal{E}$, arbitrarily choose one of the ends and assign it the value $+1$; similarly, assign the value $-1$ to the other end. Define the incidence matrix of the graph $\mathcal{G}$ by  $\mathcal{B}\in \bR^{n \times m}$ through the relation:
\vspace{-0.3cm}
    \begin{equation*}
    \cB_{ik}=
    \begin{cases}
    +1 \quad &\text{if $(i,k)\in\mathcal{E}$ and $+1$ was assigned to $i$}\\
    -1 \quad &\text{if $(i,k)\in\mathcal{E}$ and $-1$ was assigned to $i$}\\
    0 \quad &\text{otherwise}.
    \end{cases}
    \end{equation*} 
Note that, by construction,
\vspace{-0.2cm}
\begin{equation}
    \mathcal{B}^\top 1 =0.
\end{equation}
We assume that each transmission line is purely resistive~\cite{lasseter2010certs}. A schematic of each DGU is provided in Figure \ref{fig:microgrid} and the variables used to specify the dynamic evolution of the microgrid are summarized in Table~\ref{tab:symbols}.

	\begin{table}[hbtp]
		\begin{center}
			\caption{A summary of the used variables}\label{tab:symbols}
			\begin{tabular}{clcl}
				\toprule
				& {\bf States and input} & &{\bf Parameters}\\
				\midrule
				$I_{i}$						& Generated current &$ L_{i}$ & Filter  inductance\\
				$V_i$						& Load voltage &$C_{i}$ &Shunt capacitor\\
				$u_i$						& Control input &$G_{i}$ & Load impedance\\ 
				& &$G_{li},~G_{hi}$ & Bounds for load impedance\\
				$V_{s,i}$& Voltage source &$R_{tk}$ & Line resistance\\
				\bottomrule
			\end{tabular}
		\end{center}
	\end{table}
The state of the DC microgrid is specified via the voltage at every node and current through every edge. At each node $i$, denote the current through the inductor $L_{i}$ by $I_{i}$, and denote the voltage across the load impedance $G_{i}$ by $V_{i}$. Stack these variables for all the nodes to form the current and voltage vectors $I,~V:\bR_+\rightarrow \bR^n$.  The control input at the $i$-th node is given by the duty-ratio of the buck converter of the $i$-th DGU, which is denoted by $u_{i}\in(0,1].$ 
$V_{s,i}\in \bR_{+}$ denotes the voltage source connected to the $i$-th node.
Stack the control inputs and the voltage sources for all the nodes to define the control input and the voltage source for the entire system as  $u :\bR_+\rightarrow (0,1]^n$ and $V_{s} \in \bR^n_+$, respectively.
Finally, define positive definite and diagonal matrices $L,~C,~G \in \mathbb{R}^{n\times n}$ and $R_t \in \mathbb{R}^{m\times m}$ obtained by stacking and diagonalizing the resulting vector of DGU inductances, capacitances, load impedances and line resistances.

With these quantities defined, the dynamics governing the DC microgrid are given by the relations
\begin{align}\label{model:DC_microgid}
\begin{split}
        -L\dot{I}&=  V -V_{s}u\\
    C\dot{V}&= I - GV - \mathcal{B} R^{-1}_t\mathcal{B}^\top V.
\end{split}
\end{align}
For future reference, denote
\begin{align}\label{denote:Gp}
    G_p\triangleq G + \mathcal{B} R^{-1}_t\mathcal{B}^\top.
\end{align}
We make the following assumptions in the paper.
\begin{assumption}\label{ass:available_information} 
The current $I_i$ and the voltage $V_i$ are available by direct measurements at each DGU $i\in \mathcal V$.
\end{assumption}
\begin{assumption}\label{ass:load_uncertainity} 
While the exact value of the load $G$ is unknown,  upper and lower bounds are known as
\begin{align}
G_l\leq G\leq G_h,
\end{align}
where $G_l,~G_h >0$ and the inequalities are interpreted elementwise.
\end{assumption}
\vspace{-0.5cm}
\subsection{Problem formulation}
The control objective in the microgrid defined by the equations \eqref{model:DC_microgid} is to design the control input $u$ such that the voltage $V$ across the load $G$ is stabilized to a desired value. In our formulation, we have assumed that the exact value of the load $G$ is unknown. Thus, the traditional objective of ensuring a specified value for the voltage may be too stringent. Instead, we assume that each load has a safe operating region in terms of a permitted lower bound $v_l\in \bR$ and upper bound $v_h\in\bR$ for the voltage across the load $G_{i}$. Consequently, we define the first control objective as follows:
\vspace{-0.1cm}
\begin{objective}[Safe voltage regulation] \label{obj:Safe_voltage_regulation}
The voltage across the load $G$ must satisfy
\begin{align}\label{eq_obj:Safe_voltage_regulation}
    v_l\mathbb{1} \leq V(t) \leq v_h\mathbb{1},\qquad\qquad t\geq 0.
\end{align}
\end{objective}
The second objective is to prevent the over or under drawing the current from the source. Thus, we define the second control objective as follows:
\vspace{-0.1cm}
\begin{objective}[Safe current regulation] \label{obj:Safe_current_regulation} The current through the load must satisfy the bounds
\begin{align}\label{obj:Safe_gen_currents}
     I_l\mathbb{1} \leq I(t) \leq I_h\mathbb{1},\qquad\qquad t\geq 0,
 \end{align}
where $I_l, ~I_h \in \bR$ are lower and upper bounds for the allowable values of the current.
\end{objective}
\vspace{-0.1cm}
The problem we consider can now be stated as follows.

\noindent{\bf Problem Statement:} Given the DC microgrid as described in \eqref{model:DC_microgid}, we would like to design decentralized control inputs $u^i$ at each node $i\in \mathcal{V}$ to achieve Objectives~\ref{obj:Safe_voltage_regulation} and~\ref{obj:Safe_current_regulation}, with each $u^{i}$ being computed using only local information about the state variables $I^i,~V^i$ at node $i$.

Since the desired voltage across the load should be less than the supply voltage $V_s$, a trivial bound for $v_{h}$ in~(\ref{eq_obj:Safe_voltage_regulation}) is given by the source vector $V_{s}$. To see this more formally, note that for a given constant input $\overline{u}\in \bR^n$, the corresponding steady state solution $(\overline I, \overline V)$ of \eqref{model:DC_microgid}  satisfies
\begin{align}\label{equilibria}
    - \overline V +V_s\overline u&=0,\\
    \label{equilibria2}
   \overline I - G\overline V - \mathcal{B} R^{-1}_t\mathcal{B}^\top \overline V &=0.
\end{align}
Thus, the set of all feasible forced equilibria are given by the tuples $(\overline I,\overline V,\overline u)$ that  satisfy~\eqref{equilibria} and~\ref{equilibria2}. Since $0<\overline{u} \leq 1$, the equation \eqref{equilibria} implies that $\overline{V} \leq V_s$. We formalize this through the following assumption needed for the feasibility of the problem.
\vspace{-0.2cm}
\begin{assumption}\label{ass:buck_converter_critera}
We assume that
\begin{align}
    v_l\leq v_h < V_s.
\end{align}
\end{assumption}
Finally, we assume that the problem is feasible.
\vspace{-0.2cm}
\begin{assumption}
\label{ass:feasibility}
The problem stated above is feasible  in the sense that there exists at least one control sequence achieving Objectives~\ref{obj:Safe_voltage_regulation} and~\ref{obj:Safe_current_regulation}. In particular, the initial conditions for the voltages $V$ and the currents $I$ are in the sets defined in~(\ref{eq_obj:Safe_voltage_regulation}) and~(\ref{obj:Safe_gen_currents}).
\end{assumption}
\vspace{-0.3cm}
\section{Proposed Solution}\label{sec:results}
The problem formulated above poses two difficulties. The first is that of guaranteeing the safety objectives~\ref{obj:Safe_voltage_regulation} and~\ref{obj:Safe_current_regulation}. The second is the limited knowledge of the load $G$. For pedagogical ease, we tackle these issues sequentially and present our solution to the problem formulated above in three steps:
\begin{itemize}
    \item We first show how  Objective~\ref{obj:Safe_voltage_regulation} can be guaranteed when the load $G$ is known.
    \item We then extend the solution to the case when only a lower and an upper bound on the load $G$ are known. 
    \item Finally, we extend the solution to also include Objective~\ref{obj:Safe_current_regulation}.
\end{itemize}
The proofs of all the results are provided in the Appendices.
\vspace{-0.3cm}
\subsection{Guaranteeing Objective~\ref{obj:Safe_voltage_regulation} with known load}
We begin by designing a controller that guarantees objective~\ref{obj:Safe_voltage_regulation} when the load $G$ is known. We utilize Control Barrier Functions (CBFs) for the purpose. A brief introduction to CBFs is provided in Appendix A for the interested reader. 

Specifically, for all nodes $i \in \mathcal{V}$, we propose the  controller 
obtained by solving the following optimization problem:
\begin{align}\label{QP:1}
u_{i,1}^{\mathrm{opt}}= & \arg\min_{a_i \in \bR}\|a_i\|^2\\
\nonumber\text{s.t.}\hspace{2mm} &   a_iV_{s,i} - V_i +\eta_{l,i} (I_i - Gv_{l}\mathbb{1}) \geq 0 \\
\nonumber& -a_iV_{s,i} + V_i  -  \eta_{h,i} (I_i - Gv_{h}\mathbb{1}) \geq 0\\
\nonumber& 0\leq a_i\leq 1,
\end{align}
where $0\leq \eta_{l,i},~\eta_{h,i} \leq 0$ are tuning parameters. 

We now show that this controller stabilizes the system to the safe set \eqref{obj:Safe_voltage_regulation}. 
\begin{theorem}\label{prop:CBF_known_G}
Consider the problem formulated in Section~\ref{sec:problem} with the load $G$ being known. The controller $u_{i,1}^{\mathrm{opt}}$ designed in \eqref{QP:1} ensures that the system \eqref{model:DC_microgid} satisfies Objective \ref{obj:Safe_voltage_regulation}. 
\end{theorem}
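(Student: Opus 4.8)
The plan is to exploit the cascade structure of \eqref{model:DC_microgid}: the duty ratio $u$ enters only the inductor equation $-L\dot I = V - V_s u$, so the control has relative degree one with respect to the current $I$ and relative degree two with respect to the voltage $V$. I would therefore read \eqref{QP:1} not as a voltage barrier but as a \emph{current} barrier, and then recover the voltage bound \eqref{eq_obj:Safe_voltage_regulation} as a consequence of the interconnection structure. Concretely, for each node $i$ define the candidate barrier functions $\beta_{l,i} = I_i - G_i v_l$ and $\beta_{h,i} = G_i v_h - I_i$, whose nonnegativity encodes the current band $G_i v_l \le I_i \le G_i v_h$. Since $G$ is diagonal we have $(Gv_l\mathbb{1})_i = G_i v_l$, and since $\mathcal{B}^\top\mathbb{1}=0$ these thresholds are exactly the currents consistent with all node voltages held at $v_l$ (respectively $v_h$): at $V=v_l\mathbb{1}$ the coupling $\mathcal{B}R_t^{-1}\mathcal{B}^\top V$ vanishes, so the forced equilibrium current is $G v_l\mathbb{1}$.

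The first step is to show these thresholds are maintained. Substituting the inductor dynamics $L_i\dot I_i = V_{s,i}u_i - V_i$, evaluated at the optimizer $u_i = u_{i,1}^{\mathrm{opt}} = a_i$, into the two affine constraints of \eqref{QP:1} turns them into $L_i\dot\beta_{l,i} + \eta_{l,i}\beta_{l,i}\ge 0$ and $L_i\dot\beta_{h,i} + \eta_{h,i}\beta_{h,i}\ge 0$. With the tuning parameters chosen positive these are precisely the standard CBF inequalities with the linear class-$\mathcal{K}$ functions $\alpha(s)=(\eta_{\cdot,i}/L_i)s$, so the forward-invariance result recalled in Appendix~A gives that, whenever the current starts in the band, it remains there for all $t\ge 0$.

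The second and central step is the cascade to the voltage. Writing $\mathcal{L}=\mathcal{B}R_t^{-1}\mathcal{B}^\top$ for the weighted (positive semidefinite) graph Laplacian, the voltage dynamics read $C\dot V = I - GV - \mathcal{L}V$. I would establish invariance of the box $\{v_l\mathbb{1}\le V\le v_h\mathbb{1}\}$ by a Nagumo / discrete-maximum-principle argument at the boundary. At any instant and any node $j$ attaining the minimum voltage with $V_j = v_l$, every neighbour satisfies $V_k\ge V_j$, so $(\mathcal{L}V)_j$, being a positive-weighted sum of the differences $V_j-V_k$ over neighbours $k$ of $j$, satisfies $(\mathcal{L}V)_j\le 0$; hence $C_j\dot V_j \ge I_j - G_j v_l = \beta_{l,j}\ge 0$ by the first step, and $V_j$ cannot fall below $v_l$. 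The symmetric computation at a node attaining $V_j=v_h$ gives $(\mathcal{L}V)_j\ge 0$ and $\dot V_j\le 0$. Taking the safe set to be the product of the voltage box with the current band and invoking Assumption~\ref{ass:feasibility} to place the initial condition in it, forward invariance of the voltage box---that is, Objective~\ref{obj:Safe_voltage_regulation}---follows.

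The step I expect to be the main obstacle is precisely this cascade: because $u$ cannot act on $V$ directly, the barrier must be imposed on $I$, and the line-coupling term $\mathcal{L}V$ is nonlocal, so a naive per-node voltage bound fails. The sign argument above---which turns the Laplacian coupling into a favourable term exactly at the extremal node, using $\mathcal{B}^\top\mathbb{1}=0$---is what lets the decentralized current band translate into the global voltage bound, and it is the non-routine heart of the proof. A secondary technical point to verify is well-posedness of \eqref{QP:1}, i.e.\ that the intersection of the two half-spaces with $a_i\in[0,1]$ is nonempty so that the argmin is defined; this is where the source-voltage margin $v_h<V_s$ and the feasibility hypothesis, Assumptions~\ref{ass:buck_converter_critera} and~\ref{ass:feasibility}, enter.
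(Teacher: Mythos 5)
Your proof is correct and uses the same two-layer decomposition as the paper, but it justifies the crucial voltage layer by a different and arguably tighter argument. The paper proceeds ``top down'': it takes $b_l=V-v_l\mathbb{1}$, $b_h=v_h\mathbb{1}-V$ as ZCBF candidates with the \emph{matrix} gain $C^{-1}G_p$, reduces the resulting condition to $I-Gv_l\mathbb{1}\ge 0$ and $-I+Gv_h\mathbb{1}\ge 0$ using $\mathcal{B}^\top\mathbb{1}=0$ (exactly your current band), then promotes these to second-layer barriers $B_l,B_h$ whose ZCBF conditions with gain $L^{-1}[\eta_l]$, $L^{-1}[\eta_h]$ are precisely the first two QP constraints; invariance of $\Omega_b$ is then inferred from the high-relative-degree CBF construction of Appendix~A. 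Your ``bottom up'' version derives the identical inequalities but replaces the first layer's appeal to Theorem~4 with a direct Nagumo/maximum-principle computation at the extremal node, using that $(\mathcal{B}R_t^{-1}\mathcal{B}^\top V)_j$ has the right sign when $V_j$ attains the bound. This buys something real: the paper's use of $C^{-1}G_p$ as ``the class $\mathcal{K}$ function'' is not literally licensed by the scalar ZCBF theorem it cites, since the condition $\dot b_l+C^{-1}G_p b_l\ge 0$ couples components across nodes; the rigorous way to close that step is either your boundary sign argument or the observation that $-C^{-1}G_p$ is Metzler so a vector comparison lemma applies. Your route makes this explicit rather than implicit. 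Two shared caveats apply to both proofs: forward invariance of the current band requires $I(0)$ to lie in $\Omega_B$ (i.e., $Gv_l\mathbb{1}\le I(0)\le Gv_h\mathbb{1}$), which is slightly stronger than what Assumption~4 literally states, and neither argument verifies nonemptiness of the QP feasible set beyond invoking Assumptions~3 and~4 --- you are right to flag the latter as a point needing attention.
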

\begin{proof}
See Appendix B.
\end{proof}
Note that the last constraint in the QP \eqref{QP:1} models the physical limits of the control input. 
\begin{remark}[Decentralized controller]
In the proposed optimization based controller, computing $u_{i,1}^{\mathrm{opt}}$ requires only local information of the states $(V_i, I_i)$. Thus, the controller has a decentralized structure as required. 
\end{remark}
The proposed controller is independent of the exact values of the parameters $L$ and $C$. Thus, even if we have limited knowledge of these parameters, the proposed controller provides some robustness with respect to variation in the values of these parameters. 

\subsection{Guaranteeing Objective~\ref{obj:Safe_voltage_regulation} with unknown load}
While the decentralized controller presented in Proposition \ref{prop:CBF_known_G} ensures that the voltage $V(t)$ lies with in the prescribed safe set, it requires an accurate knowledge of the load $G$. In practice, the exact value of the load is unknown and usually tends to change slowly with time. We now extend the controller presented above to not require this knowledge and hence be robust with respect to the precise value of the load $G$.

Specifically, we consider the controller obtained by solving the following optimization problem:
\begin{align}\label{QP:2}
u_{i,2}^{\mathrm{opt}}= & \arg\min_{a_i \in \bR}\|a_i\|^2\\
\nonumber\text{s.t.}\hspace{2mm} &   a_iV_{s,i} - V_i +\eta_{l,i} (I_i - G_{l}v_{l}\mathbb{1}) \geq 0 \\
\nonumber& -a_iV_{s,i} + V_i  -  \eta_{h,i} (I_i - G_{h}v_{h}\mathbb{1}) \geq 0\\
\nonumber& 0\leq a_i\leq 1,
\end{align}
where $0\leq \eta_{l,i},~\eta_{h,i} \leq 0$ are tuning parameters. We can then state the following result.
\begin{theorem}\label{prop:CBF_unknown_G}
Consider the problem formulated in Section~\ref{sec:problem} with only the bounds $G_{l}$ and $G_{h}$ on the load $G$ being known. The controller $u_{i,2}^{\mathrm{opt}}$ calculated as proposed in \eqref{QP:2} ensures that the system \eqref{model:DC_microgid} satisfies Objective \ref{obj:Safe_voltage_regulation}. 
\end{theorem}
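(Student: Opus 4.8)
The plan is to reduce Theorem~\ref{prop:CBF_unknown_G} to the already-established Theorem~\ref{prop:CBF_known_G}, rather than repeating the control-barrier analysis from scratch. The two programs \eqref{QP:1} and \eqref{QP:2} share the same objective, the same decision variable $a_i$, and the same box constraint $0\le a_i\le 1$; they differ only in that the exact load $G$ appearing in the two barrier inequalities of \eqref{QP:1} is replaced by the known bounds $G_l,G_h$ in \eqref{QP:2}. Since $a_iV_{s,i}-V_i=L_i\dot I_i$ along \eqref{model:DC_microgid}, each inequality is a first-order control-barrier condition for a current barrier obtained by subtracting a constant from $I_i$; the constants encode an admissible current band of the form $[\,G_i v_l,\,G_i v_h\,]$, whose forward invariance, together with the graph-Laplacian structure of $\mathcal B R_t^{-1}\mathcal B^\top$ (so that the coupling term has a favourable sign at the extremal node of $V$), is what forces $V_i\in[v_l,v_h]$ in Theorem~\ref{prop:CBF_known_G}.

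Concretely, I would proceed in three steps. First, fix the unknown but realized load $G$ and recall that by Assumption~\ref{ass:load_uncertainity} it satisfies $G_l\le G\le G_h$ elementwise. Second, show that any $a_i$ feasible for \eqref{QP:2} is also feasible for \eqref{QP:1} written at this true $G$: the box constraint is identical, so only the two barrier inequalities must be checked, and each reduces to a sign comparison between a term containing $G$ and the corresponding term containing $G_l$ or $G_h$, scaled by $\eta_{l,i}$ or $\eta_{h,i}$ and by $v_l$ or $v_h>0$. Third, having established feasibility, conclude that the minimizer $u_{i,2}^{\mathrm{opt}}$ of \eqref{QP:2} makes the known-load barrier inequalities of \eqref{QP:1} hold pointwise in time, so the closed-loop trajectory of \eqref{model:DC_microgid} satisfies every hypothesis invoked in the proof of Theorem~\ref{prop:CBF_known_G}. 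Applying that theorem for the realized $G$ yields Objective~\ref{obj:Safe_voltage_regulation}, and since the argument is uniform over $G\in[G_l,G_h]$, the guarantee is robust to the load uncertainty.

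Two points need care. The secondary point is well-posedness: I must confirm that \eqref{QP:2} is feasible so that $u_{i,2}^{\mathrm{opt}}$ exists, which follows from Assumption~\ref{ass:feasibility} together with the fact that the feasible region of \eqref{QP:2} contains that of \eqref{QP:1}. The main obstacle is the direction of the implication in the second step: substituting $G_l$ or $G_h$ for $G$ must \emph{tighten}, not relax, the known-load inequalities for every admissible $G$. This amounts to verifying that each barrier inequality is monotone in $G$ with the correct sign, which is precisely what dictates whether $G_l$ or $G_h$ may appear in the lower and in the upper constraint, and which fixes the admissible signs of the tuning parameters $\eta_{l,i},\eta_{h,i}$. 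Arranging both inequalities to tighten simultaneously — so that the robust current band lies inside $[\,G_i v_l,\,G_i v_h\,]$ for all $G\in[G_l,G_h]$ — is the crux of the argument, and is where I would concentrate the effort.
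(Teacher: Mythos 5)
Your overall strategy---reduce Theorem~\ref{prop:CBF_unknown_G} to Theorem~\ref{prop:CBF_known_G} by showing that every $a_i$ feasible for \eqref{QP:2} is feasible for \eqref{QP:1} at the realized load---is coherent, and you have correctly located the crux: whether replacing $G$ by $G_l$ in the lower constraint and by $G_h$ in the upper constraint \emph{tightens} the known-load inequalities. But you leave that step unresolved, and when it is carried out it goes the wrong way. Under Assumption~\ref{ass:load_uncertainity}, $G_l\le G\le G_h$ with $v_l,v_h>0$ and $\eta_{l,i},\eta_{h,i}\ge 0$, so
\begin{align*}
a_iV_{s,i}-V_i+\eta_{l,i}\bigl(I_i-G_{l}v_l\mathbb{1}\bigr)\;\ge\; a_iV_{s,i}-V_i+\eta_{l,i}\bigl(I_i-Gv_l\mathbb{1}\bigr),
\end{align*}
i.e.\ the first constraint of \eqref{QP:2} is a \emph{relaxation} of the corresponding constraint of \eqref{QP:1}, not a tightening; the same holds for the upper constraint with $G_h$. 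Hence feasibility for \eqref{QP:2} does not imply feasibility for \eqref{QP:1}, and your step three collapses. Your own write-up already contains the symptom: in the ``secondary point'' you assert that the feasible region of \eqref{QP:2} \emph{contains} that of \eqref{QP:1} (which is true), while step two requires the opposite inclusion; both cannot hold unless the two programs coincide. To make the reduction work one would need $G_h$ in the lower barrier term and $G_l$ in the upper one (so that $I\ge v_lG_h\mathbb{1}\Rightarrow I\ge v_lG\mathbb{1}$ and $I\le v_hG_l\mathbb{1}\Rightarrow I\le v_hG\mathbb{1}$), together with a nonemptiness condition of the form $v_lG_h\le v_hG_l$.

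For comparison, the paper's proof does not argue through the feasible sets of the two quadratic programs. It introduces robustified second-level barriers $\hat B_l=I-v_lG_l\mathbb{1}$ and $\hat B_h=-I+v_hG_h\mathbb{1}$, asserts the inclusion $\hat\Omega_B\subset\Omega_B$, and then repeats the relative-degree-two ZCBF derivation to show that \eqref{QP:2} renders $\hat\Omega_B$ invariant. That argument hinges on exactly the monotonicity you flagged, and with the same orientation: since $G_l\le G\le G_h$ one in fact has $\Omega_B\subseteq\hat\Omega_B$ rather than the claimed reverse inclusion. So the point you identified as ``where I would concentrate the effort'' is indeed the load-bearing step of the result, and neither your sketch nor the substitution pattern in \eqref{QP:2} supplies it; a complete proof must either swap the roles of $G_l$ and $G_h$ in the barrier terms or give a separate argument for why the relaxed current band still confines the voltage to $[v_l,v_h]$.
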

\begin{proof}
See Appendix C.
\end{proof}

\vspace{-0.5cm}
\subsection{Guaranteeing both Objectives~\ref{obj:Safe_voltage_regulation} and~\ref{obj:Safe_current_regulation}}

To satisfy Objective~\ref{obj:Safe_current_regulation}, we can proceed in a similar fashion as above. However, we note that the voltages and currents in each DGU are not independent, and hence, their constraints need to be studied jointly. To this end, we note that if we define
\begin{align}\label{eq:feasible_currrents}
\begin{split}
    \tilde{I}_{l}&\triangleq \max{\left\{v_lG_{l}\mathbb{1}, I_{l}\right\}} \\
    \tilde{I}_{h}&\triangleq \min{\left\{v_hG_{h}\mathbb{1}, I_{h}\right\}},
    \end{split}
\end{align}
where the operations $\max$ and $\min$ are defined elementwise, then satisfying the conditions $\tilde{I}_{l}\geq 0$ and $\tilde{I}_{h}\geq 0$ will ensure that both Objectives~\ref{obj:Safe_voltage_regulation} and~\ref{obj:Safe_current_regulation} are met. 

We consider the controller obtained by solving the following optimization problem:
\begin{align}\label{QP:3}
u_{i,3}^{\mathrm{opt}}= & \arg\min_{a_i \in \bR}\|a_i\|^2\\
\nonumber\text{s.t.}\hspace{2mm} &   a_iV_{s,i} - V_i +\eta_{l,i} (I_i - \tilde{I}_{l,i}\mathbb{1}) \geq 0 \\
\nonumber& -a_iV_{s,i} + V_i  -  \eta_{h,i} (I_i - \tilde{I}_{h,i}) \geq 0\\
\nonumber& 0\leq a_i\leq 1,
\end{align}
where $0\leq \eta_{l,i},~\eta_{h,i} \leq 0$ are tuning parameters and state the following result.
\begin{theorem}\label{prop:CBF_gen_cur}
Consider the problem formulated in Section~\ref{sec:problem}. The controller $u_{i,3}^{\mathrm{opt}}$ calculated as proposed in \eqref{QP:3} ensures that the system \eqref{model:DC_microgid} satisfies both Objectives  \ref{obj:Safe_voltage_regulation} and~\ref{obj:Safe_current_regulation}. 
\end{theorem}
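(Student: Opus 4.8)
The plan is to reduce Theorem~\ref{prop:CBF_gen_cur} to the barrier-function machinery already established for Theorems~\ref{prop:CBF_known_G} and~\ref{prop:CBF_unknown_G}, exploiting the observation that the pair $(\tilde I_l,\tilde I_h)$ in \eqref{eq:feasible_currrents} is exactly the componentwise intersection of the voltage-induced current box $[v_l G_l\mathbb{1},\,v_h G_h\mathbb{1}]$ enforced in \eqref{QP:2} with the current box $[I_l\mathbb{1},\,I_h\mathbb{1}]$ demanded by Objective~\ref{obj:Safe_current_regulation}. First I would show that, exactly as in \eqref{QP:1} and \eqref{QP:2}, the two inequalities in \eqref{QP:3} are control barrier function conditions on the generated current. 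Since the optimizer $u_{i,3}^{\mathrm{opt}}$ is by definition feasible, setting $a_i=u_{i,3}^{\mathrm{opt}}=u_i$ and substituting the current dynamics $L_i\dot I_i=V_{s,i}u_i-V_i$ from \eqref{model:DC_microgid}, the first constraint becomes $L_i\dot I_i+\eta_{l,i}(I_i-\tilde I_{l,i})\ge 0$ and the second becomes $-L_i\dot I_i-\eta_{h,i}(I_i-\tilde I_{h,i})\ge 0$. Writing $h_{l,i}=I_i-\tilde I_{l,i}$ and $h_{h,i}=\tilde I_{h,i}-I_i$, these are the standard barrier inequalities $\dot h_{l,i}\ge -(\eta_{l,i}/L_i)h_{l,i}$ and $\dot h_{h,i}\ge -(\eta_{h,i}/L_i)h_{h,i}$ reviewed in Appendix~A; by the comparison lemma, and since Assumption~\ref{ass:feasibility} places the initial current inside the set, they render $\{\tilde I_{l,i}\le I_i\le \tilde I_{h,i}\}$ forward invariant for every node $i\in\mathcal V$.

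With this current invariance in hand, both objectives would follow by reading off the two defining inequalities of $\tilde I_l$ and $\tilde I_h$. Because $\tilde I_{l,i}=\max\{v_l G_{l,i},I_l\}\ge I_l$ and $\tilde I_{h,i}=\min\{v_h G_{h,i},I_h\}\le I_h$, forward invariance of the current box immediately yields $I_l\mathbb{1}\le I(t)\le I_h\mathbb{1}$, which is Objective~\ref{obj:Safe_current_regulation}. For Objective~\ref{obj:Safe_voltage_regulation} I would instead use the complementary inequalities $\tilde I_{l,i}\ge v_l G_{l,i}$ and $\tilde I_{h,i}\le v_h G_{h,i}$, which show that the current box enforced by \eqref{QP:3} is contained in the one enforced by \eqref{QP:2}. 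The voltage part of the proof of Theorem~\ref{prop:CBF_unknown_G} establishes invariance of $[v_l\mathbb{1},v_h\mathbb{1}]$ through a Nagumo boundary check for $C\dot V=I-G_pV$ with $G_p$ as in \eqref{denote:Gp}: on the face $V_i=v_l$ (with $V_j\ge v_l$ at the remaining nodes) the weighted-Laplacian term $(\mathcal B R_t^{-1}\mathcal B^\top V)_i$ is nonpositive, so $\dot V_i\ge 0$ as soon as $I_i$ meets the enforced lower bound, and symmetrically $\dot V_i\le 0$ on the face $V_i=v_h$. Since the bounds enforced here are at least as tight, that argument carries over unchanged and gives $v_l\mathbb{1}\le V(t)\le v_h\mathbb{1}$, i.e., Objective~\ref{obj:Safe_voltage_regulation}.

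The step I expect to be the main obstacle is establishing, rather than assuming, that at each instant the two barrier constraints in \eqref{QP:3} can be met simultaneously by some admissible $a_i\in(0,1]$, so that $u_{i,3}^{\mathrm{opt}}$ actually exists and is feasible. This requires the intersected current box to be nonempty, $\tilde I_{l,i}\le \tilde I_{h,i}$, which is a genuine compatibility condition between the voltage and current specifications and is precisely what Assumption~\ref{ass:feasibility} postulates; and it requires sufficient control authority, which is where Assumption~\ref{ass:buck_converter_critera} ($v_l\le v_h<V_s$) enters, since $a_iV_{s,i}$ must be able to steer $L_i\dot I_i=a_iV_{s,i}-V_i$ in both directions near the boundaries of the box. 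I would therefore verify that, under Assumptions~\ref{ass:buck_converter_critera} and~\ref{ass:feasibility}, the feasible interval for $a_i$ cut out by the two linear constraints intersects $[0,1]$, so that the minimum-norm solution of \eqref{QP:3} is well defined and satisfies both barrier inequalities; the coupling through the graph Laplacian is then benign, since as noted above it only ever reinforces the boundary inequalities used in the Nagumo check.
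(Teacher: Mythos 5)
Your overall route is the one the paper intends (its own proof is omitted as ``analogous'' to Theorems~\ref{prop:CBF_known_G} and~\ref{prop:CBF_unknown_G}): treat the two inequalities in \eqref{QP:3} as second-level ZCBF conditions on the current obtained from the relative-degree-two construction, conclude forward invariance of the per-node current box $[\tilde I_{l,i},\tilde I_{h,i}]$ by a scalar comparison argument, and then read off both objectives from the definition \eqref{eq:feasible_currrents}. You in fact add two things the paper glosses over: an explicit Nagumo-type boundary check showing that the Laplacian coupling $\mathcal{B}R_t^{-1}\mathcal{B}^\top V$ has the right sign on the faces $V_i=v_l$ and $V_i=v_h$, and an explicit discussion of when the QP is feasible (the paper hides this in Assumption~\ref{ass:feasibility} and in the soft-constraint modification \eqref{QP:load_uncertainity_sol_prac}). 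Both additions are worthwhile.

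There is, however, one step that does not close as written, and you inherit it from Appendix~C rather than repair it: the direction of the robust load bound. Your Nagumo check on the face $V_i=v_l$ gives $C_i\dot V_i\ge I_i-G_i v_l$, so you need $I_i\ge v_l G_i$ with the \emph{true} load $G_i$; since only $G_{l,i}\le G_i\le G_{h,i}$ is known, the correct sufficient condition is $I_i\ge v_l G_{h,i}$, whereas \eqref{eq:feasible_currrents} and \eqref{QP:2} enforce only $I_i\ge v_l G_{l,i}$, which is \emph{weaker} (the inclusion $\hat\Omega_B\subset\Omega_B$ claimed in Appendix~C actually runs the other way). Symmetrically, the upper face needs $I_i\le v_h G_{l,i}$, not $I_i\le v_h G_{h,i}$. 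So the sentence ``that argument carries over unchanged and gives $v_l\mathbb{1}\le V(t)\le v_h\mathbb{1}$'' is not justified by the box you have rendered invariant; to fix it you should define $\tilde I_l=\max\{v_lG_h\mathbb{1},I_l\}$ and $\tilde I_h=\min\{v_hG_l\mathbb{1},I_h\}$ and re-examine nonemptiness $\tilde I_l\le\tilde I_h$, which becomes a genuine compatibility restriction. A second, minor point: Assumption~\ref{ass:feasibility} places $I(0)$ in $[I_l\mathbb{1},I_h\mathbb{1}]$, not in the possibly strictly smaller box $[\tilde I_l,\tilde I_h]$, so the invariance argument needs the stronger initialization $I(0)\in[\tilde I_l,\tilde I_h]$ (the paper itself concedes this in Section~\ref{sec:sim}, where the heuristic \eqref{QP:load_uncertainity_sol_prac} is used to first drive the state into the feasible set).
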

\begin{proof}
The proof is analogous to that of Theorems \ref{prop:CBF_known_G} and \ref{prop:CBF_unknown_G} and is omitted.
\end{proof}
\begin{remark}
Although for notational and pedagogical ease, we have assumed in the above development that the lower bound $I_{l}$ and the upper bound $I_{h}$ for all the DGUs are the same, all the arguments above can be extended to consider heterogeneous bounds. In fact, in the simulation study below, we demonstrate the heterogeneous case.
\end{remark}

\vspace{-0.3cm}
\section{Simulation Results}\label{sec:sim}
\vspace{-0.4cm}
\begin{figure}[tbph]
\centering
\begin{tikzpicture}[scale=0.8,transform shape,->,>=stealth',shorten >=1pt,auto,node distance=3cm,
                    semithick]
  \tikzstyle{every state}=[circle,thick,fill=white,draw=black,text=black]

  \node[state] (A)                    {\num{1}};
  \node[state]         (B) [above right of=A] {\num{2}};
  \node[state]         (D) [below right of=A] {\num{4}};
  \node[state]         (C) [below right of=B] {\num{3}};

  \path (A) edge   [below] node {\hspace{7mm}$I_{l1}$} (B)
  		edge 	     node {$I_{l4}$} (D)
           (B) edge      [below]        node {\hspace{-7mm}$I_{l2}$} (C)
           (C) edge         [above left]     node {$I_{l3}$} (D);

\end{tikzpicture}
\caption{DC microgrid considered with four buck converters. }\label{fig:microgrid_example}
\end{figure}
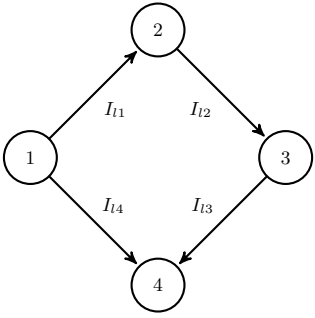

\begin{figure}[t]
	\centering
	\includegraphics[trim=0cm 2cm 2cm 2cm, clip=true, width=\columnwidth]{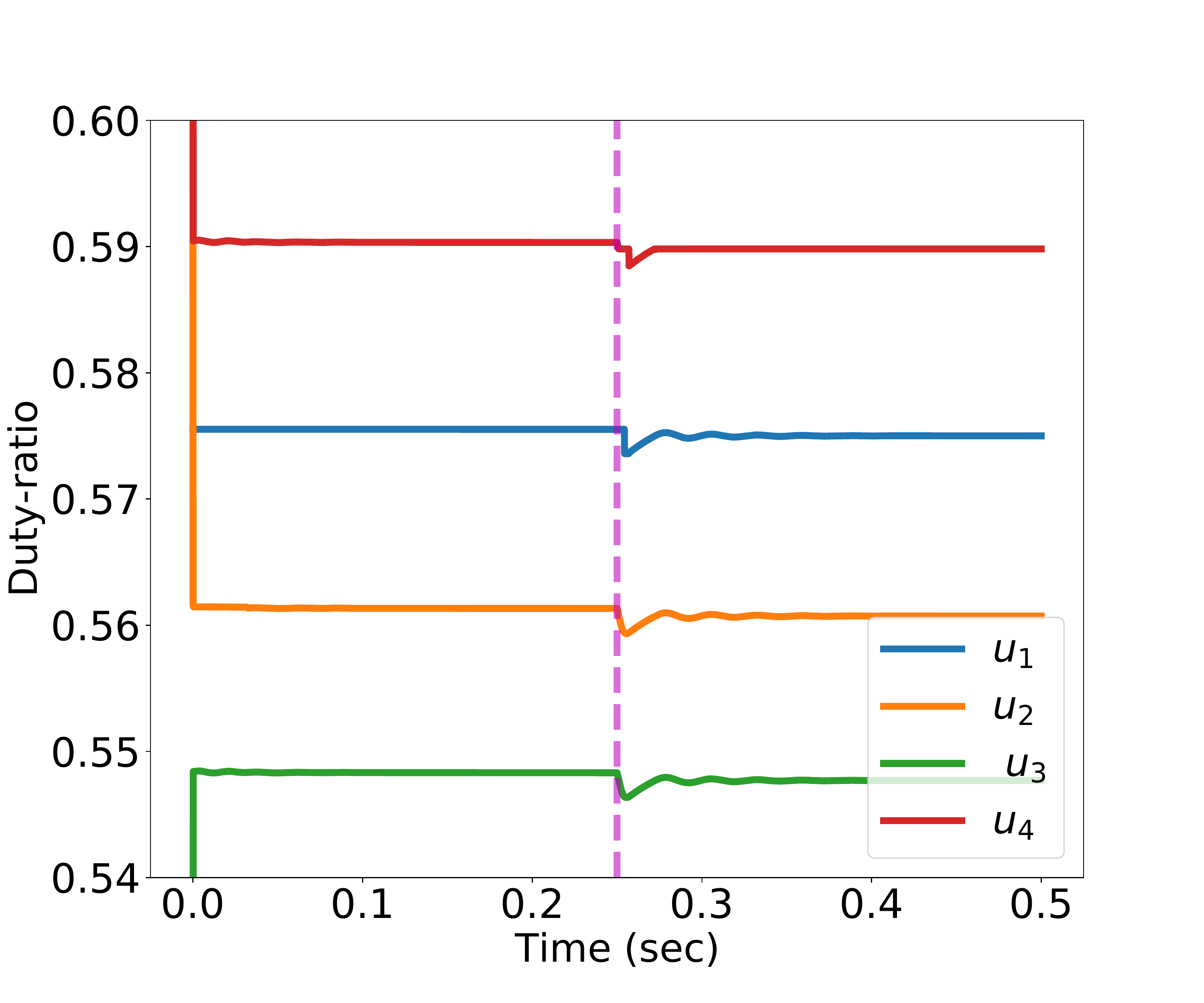}
	\caption{Control input generated by the proposed controller.}
	\label{fig:input}
\end{figure}
In this section, we illustrate the proposed controller through a simulation study. We consider a Kron reduced DC microgrid consisting of four DGUs connected as shown in Figure \ref{fig:microgrid_example}. The parameters of the DGU and distribution lines are reported in Table \ref{tab:parameters1} and Table \ref{tab:parameters2}, respectively. These parameters are similar to those used in \cite{trip2018distributed, kosaraju2020differentiation} and \cite{TRIP2018242} for simulation and experimentation respectively. We assume that the desired nominal voltage for each DGU is \num{230.0} \si{\volt}, with the safe region between \num{229.0} \si{\volt} and \num{231.0} \si{\volt}. 
Moreover, the precise value of the load parameter $G$ is unknown and only the bounds $G_l =0.95 \times G$ and $G_h = 1.05 \times G$ are known. Finally, for optimal working of the voltage source, we assume that the generating currents are bounded as shown in in Table \ref{tab:parameters1}.
\begin{figure}[t]
	\centering
	\includegraphics[trim=2cm 2cm 1cm 4cm, clip=true, width=1\columnwidth]{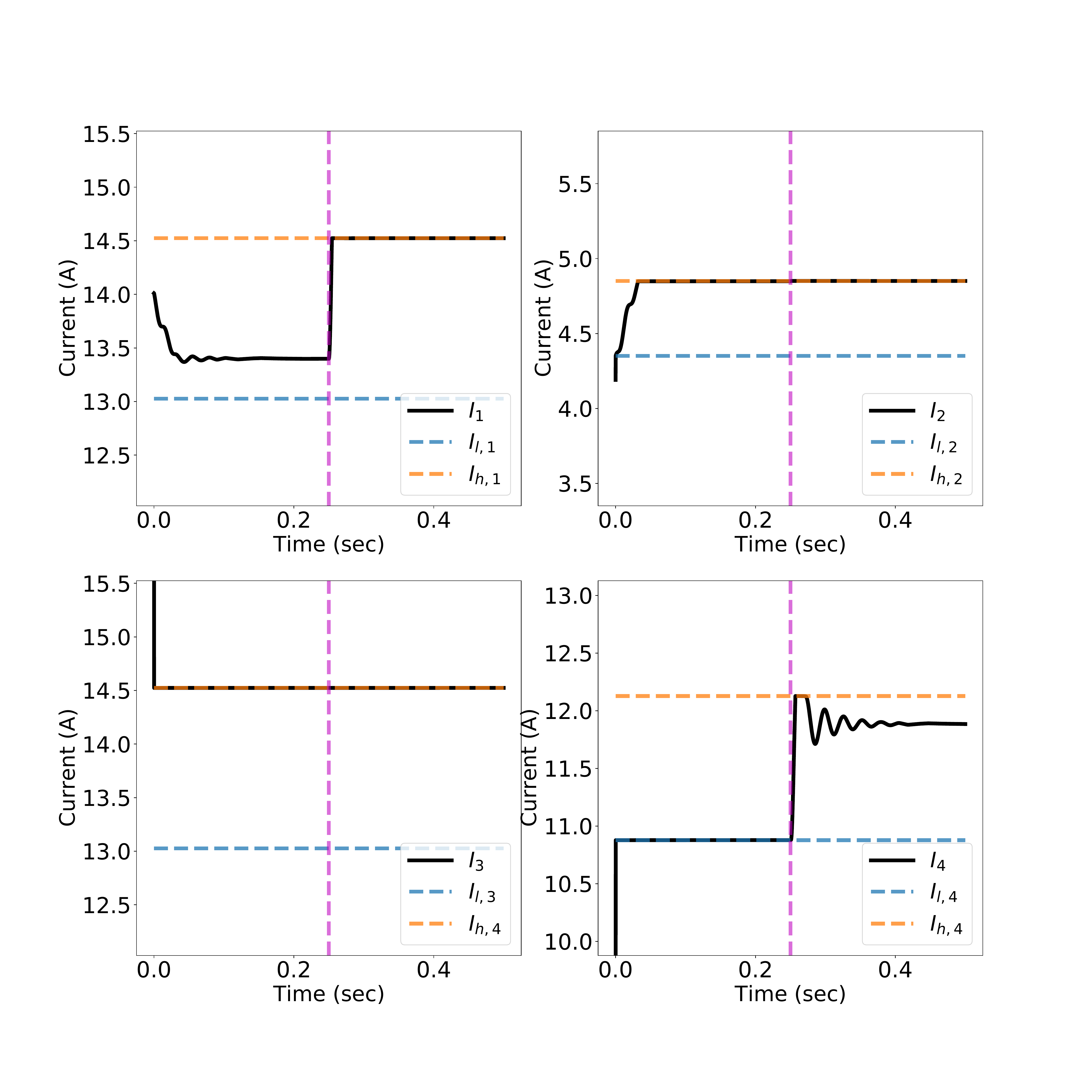}
	\caption{Currents from the voltage source in each DGU.}
	\label{fig:current}
	\vspace{-0.5cm}
\end{figure}

At time $t= 0$, we start the system with a feasible voltage value. However, to test if our method can handle slight violations of the feasibility Assumption~\ref{ass:feasibility}, the generating currents are initialized outside the prescribed safe region. Since the proposed controller assumes initialization within the feasible set, we use a numerical heuristic to bring the desired values quickly within the feasible region. Specifically, we modify the optimization problem \eqref{QP:3} to
\begin{align}\label{QP:load_uncertainity_sol_prac}
u_i^{\mathrm{opt}} = & \arg\min_{a_i, \epsilon_{l,i}, \epsilon_{h,i} \in \bR}a_i^2 + P_{l,i}\epsilon_{l,i}^2 + P_{h,i}\epsilon_{h,i}^2\\
\nonumber\text{s.t.}\hspace{2mm} &   a_iV_{s,i} - V_i +\eta_{l,i} (I_i - \tilde{I}_{l,i})+ \epsilon_{l,i} \geq 0 \\
\nonumber& -a_iV_{s,i} + V_i  -  \eta_{h,i} (I_i - \tilde{I}_{h,i}) + \epsilon_{h,i}\geq 0\\
\nonumber& 0\leq a_i\leq 1,
\end{align}
for large values of constants $P_{l,i}$ and $P_{h,i}$. Once the currents and voltages enter the feasible set, we switch to the controller \eqref{QP:3}.
The simulation is conducted for \num{0.5} \si{\sec}. The tuning parameters of the controller are given in Table \ref{tab:cont_params}. After \num{0.25} \si{\sec} the load at all the nodes is increased by $5\%$. The voltage and current signals of the resulting simulation are plotted in Figure \ref{fig:voltage} and Figure \ref{fig:current}, respectively. The control input generated by the optimization problem is plotted in Figure \ref{fig:input}. We observe that the controller ensures satisfaction of both Objectives~\ref{obj:Safe_voltage_regulation} and~\ref{obj:Safe_current_regulation} (modulo the infeasbility of the specified initial condition) despite the abrupt load change. 


\begin{table}[t]
	\caption{Microgrid Parameters}
	\centering
	{\begin{tabular}{lc | cccc}			
			DGU								&	&1		&2	&3	&4\\			
			\hline
			$L_{i}$	&(\si{\milli\henry})	&\num{1.8}		&\num{2.0}		&\num{3.0}		&\num{2.2}\\
			$C_{i}$	&(\si{\milli\farad})	&\num{2.2}		&\num{1.9}		&\num{2.5}		&\num{1.7}\\
			$V_i^{\star}$ &(\si{\volt})		&\num{230.0}		&\num{230.0}		&\num{230.0}		&\num{230.0}\\
			$V_{l,i}$ &(\si{\volt})		&\num{229.0}		&\num{229.0}		&\num{229.0}		&\num{229.0}\\
			$V_{h,i}$ &(\si{\volt})		&\num{231.0}		&\num{231.0}		&\num{231.0}		&\num{231.0}\\
			$G_{i}$ &(\si{\ohm})	&\num{1/16.7}		&\num{1/50.0}		&\num{1/16.7}		&\num{1/20.0}\\
			$I_{l,i}$ &(\si{\ampere})	&\num{13.0}		&\num{4.4}		&\num{13.0}		&\num{11.0}\\
			$I_{h,i}$ &(\si{\ampere})	&\num{14.5}		&\num{4.9}		&\num{14.5}		&\num{12.1}
	\end{tabular}}
	\label{tab:parameters1}
	\vspace{0.2cm}
	\caption{Line Parameters}
	\centering
	{\begin{tabular}{lc | cccc}			
			Line								&	&1&2	&3		&4\\					
			\hline
			$R_{k}$	&(\si{\milli\ohm})	&\num{70}		&\num{50}		&\num{80}		&\num{60}
	\end{tabular}}
	\label{tab:parameters2}
\end{table}
\begin{table}[htbp]
\caption{Controller Parameters}
\centering
	{\begin{tabular}{l | cccc}			
			node									&1&2	&3		&4\\					
			\hline
			$\eta_{l,i}$		&\num{0.5}		&\num{0.4}		&\num{0.5}		&\num{0.3}\\
			$\eta_{h,i}$		&\num{0.4}		&\num{0.3}		&\num{0.5}		&\num{0.4}\\
			$P_{l,i},~P_{h,i}$		&$10^{23}$	&$10^{23}$	&$10^{23}$		&$10^{23}$\\
	\end{tabular}}
	\label{tab:cont_params}
\end{table}
\begin{figure}[t]
	\centering
	\includegraphics[trim=1cm 6cm 1cm 8.5cm, clip=true, width=1.1\columnwidth]{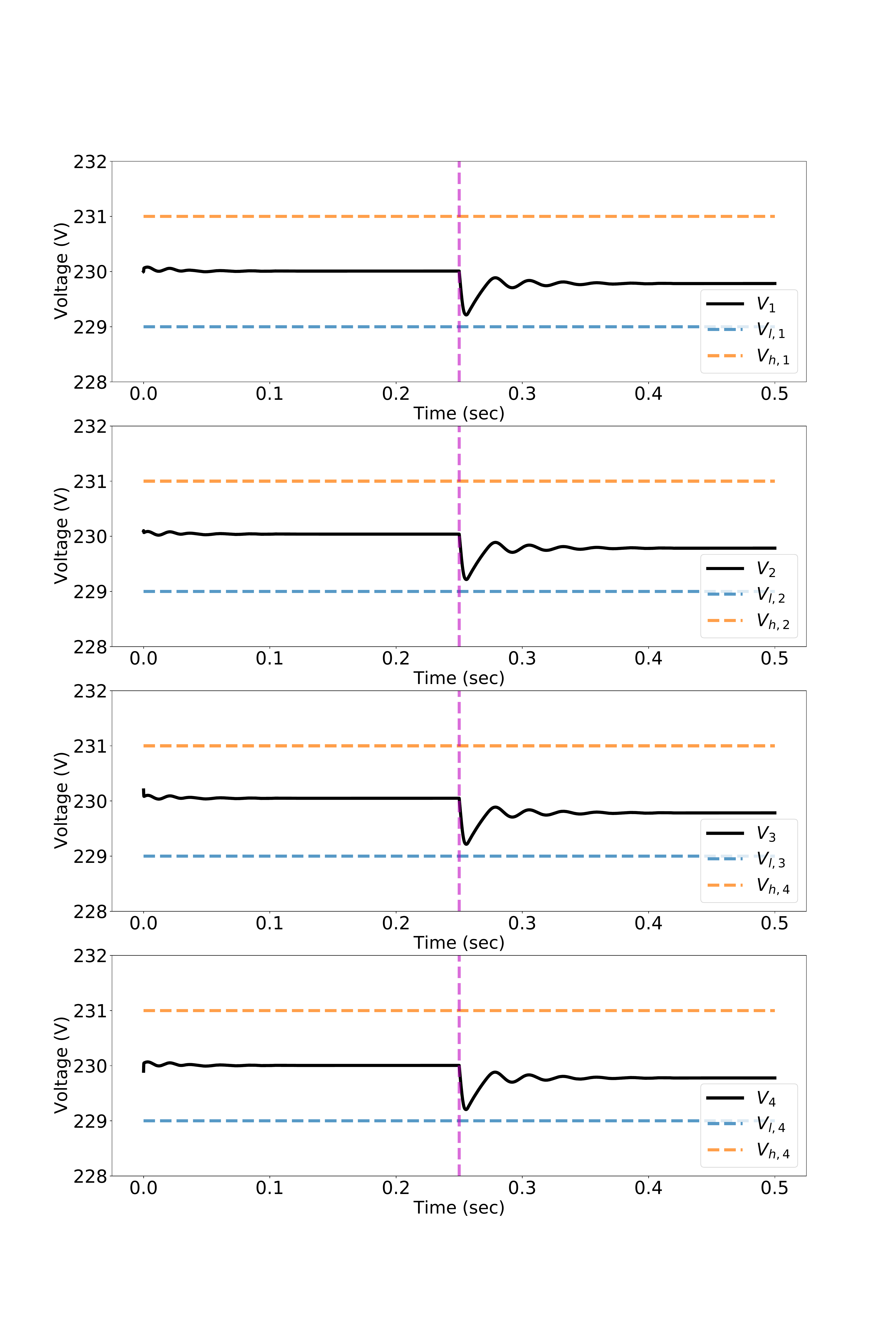}
	\caption{Voltage across the load of each DGU, considering a load variation of $5\%$ at time $t=0.25$ seconds.}
	\vspace{.2cm}
	\label{fig:voltage}
\end{figure}
\vspace{-0.15cm}
\section{Conclusions and future work}\label{sec:conc}
We presented a new decentralized optimization based controller for DC microgrids. We used Control Barrier Functions to ensure that the voltage and current signals lie within the permitted safety bounds despite the load being unknown and potentially time-varying.  
Future research directions include extending this technique for ZIP loads and towards achieve safe current sharing in DC microgrids.
\vspace{-0.25cm}
\bibliographystyle{IEEEtran}
\bibliography{reference}
\section{Appendix A: Control Barrier Functions}
Consider the affine nonlinear system
\begin{align}\label{mode:Nonlinear_system}
    \dot{x}&=f(x)+g(x)u
\end{align}
with $f$ and $g$ are locally Lipschitz continuous, $x\in \bR^n$ and $u\in U\subset \bR^m$.
 Let $\Omega$ be a zero super level-set of a continuously differentiable function $h:\bR^n\rightarrow \bR$, i.e.,
\begin{align}\label{Levelset:Omega}
    \Omega &=\left\{x \in \bR^n~|~h(x)\geq 0\right\}.
\end{align}
The function $h(x)$ is called a zeroing control barrier function (ZCBF),
if there exists a locally Lipschitz extended class $\mathcal{K}$ function $\alpha$ such that
\begin{align}\label{Barrier_cond}
    \underset{u
    \in \bR^m}{sup}\left\{L_fh(x)+L_gh(x)u +\alpha (h(x)) \right\}\geq 0~~\forall x \in \bR^n
\end{align}
where $L_fh(x)$ and $L_gh(x)$ denote the Lie derivative of $h(x)$ in the direction of $f$ and $g$, respectively.
Given a ZCBF $h(x)$, define the set for all $x\in \Omega$
\begin{align}\label{Barrier_control}
    K_{zcbf}(x)&=\left\{u \in U:L_fh(x)+L_gh(x)u +\alpha (h(x))\geq 0\right\}.
\end{align}
\begin{theorem}[See \cite{XU201554, ames2019control}]\label{thm:ZCBF}
Consider system \eqref{mode:Nonlinear_system}, with $x\in \bR^n$ and $u \in U$
denoting the state and the input, respectively, f and g are
locally Lipschitz. Let $\Omega$ be a zero super level-set of a continuously differentiable function $h(x)$, defined in \eqref{Levelset:Omega}. If the relative-degree of \eqref{mode:Nonlinear_system} w.r.t $h(x)$ is $1$ (i.e., $L_gh(x)\neq 0$), then any Lipschitz continuous controller $u\in K_{zcbf}(x)$ renders the set $\Omega$ forward
invariant and asymptotically stable.
\end{theorem}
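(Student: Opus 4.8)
The plan is to reduce both set-invariance and stability to a scalar comparison argument driven by the barrier inequality. First I would establish well-posedness: since $f$ and $g$ are locally Lipschitz and the controller $u(x)$ is taken Lipschitz continuous with $u(x)\in K_{zcbf}(x)$, the closed-loop vector field $f(x)+g(x)u(x)$ is locally Lipschitz, so for each initial condition there is a unique maximal solution $x(t)$. The relative-degree-one hypothesis $L_gh(x)\neq 0$ guarantees that the supremum in \eqref{Barrier_cond} is attainable over $\bR^m$, and hence that the admissible set $K_{zcbf}(x)$ in \eqref{Barrier_control} is nonempty, so such a controller indeed exists.

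The key observation is a differential inequality along closed-loop trajectories. Differentiating $h$ along $x(t)$ gives $\tfrac{d}{dt}h(x(t)) = L_fh(x(t)) + L_gh(x(t))u(x(t))$, and since $u(x)\in K_{zcbf}(x)$, the defining inequality in \eqref{Barrier_control} yields
\begin{align}\label{plan:diffineq}
\tfrac{d}{dt}h(x(t)) \geq -\alpha\big(h(x(t))\big).
\end{align}
I would then introduce the scalar comparison system $\dot y = -\alpha(y)$ with $y(0)=h(x(0))$. Because $\alpha$ is a locally Lipschitz extended class $\mathcal{K}$ function, this scalar ODE has a unique solution, and the comparison lemma applied to \eqref{plan:diffineq} gives $h(x(t))\geq y(t)$ for every $t$ in the common interval of existence.

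Forward invariance then follows from the sign structure of the comparison flow. For $y\geq 0$ we have $\dot y=-\alpha(y)\leq 0$ with $y=0$ an equilibrium, so by uniqueness a solution with $y(0)\geq 0$ cannot cross into the negative half-line and satisfies $y(t)\geq 0$ for all $t\geq 0$. Consequently, if $x(0)\in\Omega$, i.e. $h(x(0))\geq 0$, then $h(x(t))\geq y(t)\geq 0$, so $x(t)\in\Omega$ for all $t\geq 0$, establishing that $\Omega$ is forward invariant. For asymptotic stability of $\Omega$ I would analyze the complementary case $y(0)<0$: here $\alpha(y)<0$, so $\dot y=-\alpha(y)>0$, meaning $y$ is strictly increasing toward the unique equilibrium $0$, whence $y(t)\to 0$ (or reaches $0$ in finite time). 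The bound $h(x(t))\geq y(t)$ then forces the distance from $x(t)$ to $\Omega$ to zero, and the monotone comparison estimates supply Lyapunov stability of $\Omega$ as a set.

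The main obstacle I anticipate is the asymptotic-stability half rather than forward invariance: the comparison lemma supplies only the one-sided estimate $h(x(t))\geq y(t)$, which is exactly what is needed for invariance but is delicate for convergence from outside $\Omega$. To close this part rigorously one must combine the positivity of $\tfrac{d}{dt}h$ on $\{h<0\}$ (so trajectories are pushed toward the boundary) with the attractivity of $0$ for the comparison ODE, and separately verify forward completeness of $x(t)$—typically by restricting to a bounded neighborhood of $\Omega$ on which solutions cannot escape in finite time—so that the asymptotic conclusions are not vacuous.
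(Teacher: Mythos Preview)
The paper does not give its own proof of this theorem: it is stated in Appendix~A as a cited result, with the attribution ``See \cite{XU201554, ames2019control}'' and no argument supplied. So there is no in-paper proof to compare against.

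That said, your proposal is the standard route taken in those references. The reduction to the scalar differential inequality $\tfrac{d}{dt}h(x(t))\geq -\alpha(h(x(t)))$ and the subsequent comparison with $\dot y=-\alpha(y)$ is exactly the mechanism used by Xu et al.\ and Ames et al.\ to obtain both forward invariance (via Nagumo-type reasoning, or equivalently the comparison lemma with $y(0)\geq 0$) and asymptotic stability of $\Omega$ (via attractivity of the origin for the scalar comparison system when $y(0)<0$). Your identification of forward completeness and the one-sidedness of the comparison bound as the points requiring care is accurate, and mirrors the caveats in the cited literature; in those works this is handled either by assuming forward completeness outright or by working on a compact neighborhood of $\Omega$. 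Overall your sketch is correct and aligned with the original sources the paper defers to.
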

For many physical systems, including the one considered in the paper, the relative degree $r$ is greater than one.  In this case, as shown in \cite{xiao2019control}, one can recursively define the control barrier functions $h^i(x) = L_fh^{i-1}(x) +\alpha^I (h^{i-1}(x))$ and their zero super level sets $\Omega^i = \left\{x\in \bR^n|h^{i-1}(x)\geq 0\right\}$, where $i \in \left\{1 \ldots r-1\right\}$, $\alpha^1,\cdots,\alpha^{r-1}$ are all class $\mathcal{K}$ functions and $h^0(x) = h(x)$. This is the construction used in the paper.

%
%
\section*{Appendix B: Proof of Theorem~\ref{prop:CBF_known_G}}
Define the functions $b_l,~b_h:\bR^{2n} \rightarrow \bR^n$ as
\begin{align}\label{barrier_funs:b}
\begin{split}
    b_l &\triangleq V-v_l \mathbb{1}\\
    b_h &\triangleq v_h \mathbb{1} -V,
\end{split}
\end{align}
and their corresponding zero super level-set of \eqref{barrier_funs:b} as
\begin{align}\label{set:b}
    \Omega_b &\triangleq\left\{(I,V) \in \bR^{2n}|b_l \geq 0,~b_h\geq 0 \right\}.
\end{align}
Note that for all $(I,~V) \in \Omega_b$, the voltage $V$ satisfies Objective \ref{obj:Safe_voltage_regulation}. Since from Assumption~\ref{ass:feasibility}, the initial condition is in this set, our proof below will show that the controller in \eqref{QP:1} renders the set $\Omega_b$ forward invariant and asymptotically stable for the system \eqref{model:DC_microgid}.

To this end, consider $b_{l}$ as the candidate zeroing control barrier function. By choosing $C^{-1}G_p$ as the class $\mathcal{K}$ function, the condition~\eqref{Barrier_cond} for a zeroing control barrier function stated in the Appendix A simplifies to
\begin{equation}
\label{CBF:Known_G:a}
    \dot{b_l} + C^{-1}G_p b_l \geq 0.
\end{equation}
Multiplying $C$ on both the sides yields that~(\ref{CBF:Known_G:a}) is equivalent to the condition
\begin{equation}
    \label{CBF:Known_G:b}
    C\dot{V} + G_p (V-v_l \mathbb{1}) \geq 0.
\end{equation}
Using the relation $C\dot{V} + G_p V = I$ from~(\ref{model:DC_microgid}) and~(\ref{denote:Gp}), we can rewrite~(\ref{CBF:Known_G:b}) as
\begin{equation}
  \label{CBF:Known_G:c}
  I - v_lG_p \mathbb{1}   \geq 0.
\end{equation}
Finally, since $\mathcal{B}^\top 1 =0,$ we can simplify~(\ref{CBF:Known_G:c}) to the relation
\begin{equation}
\label{CBF:Known_G:d2_1}
    I - G v_l \mathbb{1}   \geq 0.
\end{equation}
Similarly, by considering $b_{h}$ as the zeroing control barrier function, we can obtain the condition
\begin{equation}
\label{CBF:Known_G:d2_2}
    - I + G v_h \mathbb{1}  \geq 0.
\end{equation}
Since the zero super level set of $b_{l}$ and $b_{h}$ is given by $\Omega_{b},$ Theorem~\ref{thm:ZCBF} implies that if~(\ref{CBF:Known_G:d2_1}) and~(\ref{CBF:Known_G:d2_2}) hold, then Objective~\ref{obj:Safe_voltage_regulation} is met.

However, there is no control input appearing in~(\ref{CBF:Known_G:d2_1}) and~(\ref{CBF:Known_G:d2_2}). This is a consequence of the fact that the relative degree of the system \eqref{model:DC_microgid} with respect to either  $b_l$ or $b_h$ is $2$. Following \cite{xiao2019control}, 
we can overcome this issue by recursively defining zeroing control barrier functions. To this end, define functions $B_l,~B_h:\bR^{2n} \rightarrow \bR^n$, given by
\begin{align}
\begin{split}\label{CBF:Known_G:d}
B_l & \triangleq     I - v_lG  \mathbb{1} ,\\
B_h & \triangleq    - I + v_hG  \mathbb{1} ,
\end{split}
\end{align}
with the zero super level set 
\begin{align}\label{set:B}
    \Omega_B & \triangleq \left\{(I,V) \in \bR^{2n}|B_l \geq 0,~B_h\geq 0 \right\}.
\end{align}

To enforce~(\ref{CBF:Known_G:d2_1}), consider $B_{l}$ as the candidate zeroing control barrier function. By using $L^{-1}[\eta_l]$ as the class $\mathcal{K}$ function, we can write the condition in \eqref{Barrier_cond} as 
\begin{equation}
\label{CBF:Known_G:e}
    \dot{B}_l + L^{-1}[\eta_l] B_1 \geq 0.
\end{equation}
Multiplying both sides of~(\ref{CBF:Known_G:e}) with $L$ and simplifying by using~(\ref{model:DC_microgid}), we can rewrite this condition as
\begin{equation}
    \label{CBF:Known_G:g_1}
    - V + [V_s]u + [\eta_l](I - G v_l \mathbb{1})  \geq 0.
\end{equation}
Similarly, by considering $B_{h}$ as the candidate zeroing control barrier function, we can obtain the condition
\begin{equation}
    \label{CBF:Known_G:g_2}
    V - [V_s]u - [\eta_h] (I - G v_h \mathbb{1})   \geq 0.
\end{equation}

Thus, we have shown that enforcing~(\ref{CBF:Known_G:g_1}) and~(\ref{CBF:Known_G:g_2}) ensures that the set $\Omega_B$ is forward invariant, which further enforces that $\Omega_b$ is forward invariant and asymptotically stable and hence that Objective~\ref{obj:Safe_voltage_regulation} is met. But the first two constraints in the optimization problem \eqref{QP:1} are equivalent to ~(\ref{CBF:Known_G:g_1}) and~(\ref{CBF:Known_G:g_2}). Thus, we conclude that the controller $u_{i,1}^{\mathrm{opt}}$ designed in \eqref{QP:1} ensures that the system \eqref{model:DC_microgid} satisfies Objective \ref{obj:Safe_voltage_regulation}. \qedsymbol

\section*{Appendix C: Proof of Theorem~\ref{prop:CBF_unknown_G}}
Following the arguments from the proof of Theorem~\ref{prop:CBF_known_G}, we know that Objective~\ref{obj:Safe_voltage_regulation} is satisfied if $\Omega_b$ is rendered forward invariant. In turn, this can be ensured by a control input that yields $B_l,~B_h \geq 0$. Since the exact value of $G$ is unknown, we impose the sufficient condition that 
\begin{align}
    \begin{split}
        I &\geq v_{l} G\mathbb{1} \geq v_{l}G_{l}\mathbb{1},\\
        I &\leq v_{h} G\mathbb{1} \leq v_{h}G_{h}\mathbb{1}.
    \end{split}
\end{align}
In other words, define the two functions
\begin{align}\label{cbf:Bhat}
\begin{split}
\hat B_l &\triangleq     I - v_l G_l  \mathbb{1}   \geq 0,\\
\hat B_h &\triangleq   - I + v_h G_h  \mathbb{1}  \geq 0,
\end{split}
\end{align}
with the super level set
\begin{align}\label{set:B2}
    \hat \Omega_B & \triangleq \left\{(I,V) \in \bR^{2n}|\hat B_l \geq 0,~\hat B_h\geq 0 \right\},
\end{align}
which satisfies $\hat{\Omega}_B\subset \Omega_B$. We now consider $\hat{B}_l$ and $\hat{B}_h$ as the zeroing control barrier functions. Following the proof of Theorem~\ref{prop:CBF_unknown_G}, we can derive the conditions
\begin{align}
\begin{split}\label{CBF:unKnown_G:a}
    - V + [V_s]u + [\eta_l](I - G_l v_l \mathbb{1})  &\geq 0,\\
     V - [V_s]u - [\eta_h] (I - G_h v_h \mathbb{1})   &\geq 0,
\end{split}
\end{align}
in place of~(\ref{CBF:Known_G:g_1}) and~(\ref{CBF:Known_G:g_2}. The conditions~\ref{CBF:unKnown_G:a} are equivalent to the  first two constraints in the optimization problem \eqref{QP:2}. Thus, the controller $u_{i,2}^{\mathrm{opt}}$ designed in \eqref{QP:2} renders the set $\hat\Omega_B \subset \Omega_B$, and in turn, $\Omega_b$ forward invariant and asymptotically stable. Thus, the controller $u_{i,2}^{\mathrm{opt}}$ ensures that the system \eqref{model:DC_microgid} satisfies Objective \ref{obj:Safe_voltage_regulation} in spite of lack of precise knowledge of the load $G$. \qedsymbol
\end{document}